\newcommand{\E}{\mathbb{E}}
\newtheorem{theorem}{Theorem}
\newtheorem{lemma}[theorem]{Lemma}
\newtheorem{corollary}[theorem]{Corollary}
\newtheorem{defn}[theorem]{Definition}
\def\squareforqed{\hbox{\rule{2.5mm}{2.5mm}}}
\def\blackslug{\rule{2.5mm}{2.5mm}}
\def\qed{\hfill\blackslug}
\def\QED{\ifmmode\squareforqed 
  \else{\nobreak\hfil   
    \penalty50                 
    \hskip1em                  
    \null                      
    \nobreak                   
    \hfil                      
    \squareforqed              
    \parfillskip=0pt           
    \finalhyphendemerits=0     
    \endgraf}                  
  \fi}
\def\blksquare{\rule{2mm}{2mm}}
\def\qedsymbol{\blksquare}
\newcommand{\bg}[1]{\medskip\noindent{\bf #1}}
\newcommand{\ed}{{\hfill\qedsymbol}\medskip}
\newenvironment{proof}{\noindent\textbf{Proof.}}{\ed}
\newcommand{\comment}[1]{}
\newcommand{\R}{\ensuremath{\mathbb R}}
\newcommand{\A}{\ensuremath{\mathcal{A}}}
\newcommand{\C}{\ensuremath{\mathcal{C}}}
\newcommand{\M}{\ensuremath{\mathcal M}}
\newcommand{\opt}{\text{\textsc{Opt}} }
\begin{document}

\title{Bayesian Games and the Smoothness Framework}

\author{Vasilis Syrgkanis}

\maketitle

\begin{abstract}
We consider a general class of Bayesian Games where each players utility
depends on his type (possibly multidimensional) and on the strategy profile and where players' 
types are distributed independently. We show that if their 
full information version for any fixed instance of the type profile is a smooth game then the Price of Anarchy bound
implied by the smoothness property, carries over to the Bayes-Nash Price of Anarchy. 
We show how some proofs from the literature (item bidding auctions, greedy auctions) can be 
cast as smoothness proofs or be simplified using smoothness. For first price item 
bidding with fractionally subadditive bidders we actually manage to improve by much the existing 
result \cite{Hassidim2011a} from $4$ to $\frac{e}{e-1}\approx 1.58$. This also shows a very 
interesting separation between first and second price item bidding since second price item bidding
has PoA at least $2$ even under complete information. For a larger class of 
Bayesian Games where the strategy space of a player also changes with his type we are able to show
that a slightly stronger definition of smoothness also implies a Bayes-Nash PoA bound.
We show how weighted congestion games actually satisfy this stronger definition of smoothness.
This allows us to show that the inefficiency bounds of weighted congestion games  
known in the literature carry over to incomplete versions where the weights
of the players are private information. We also show how an incomplete version of
a natural class of monotone valid utility games, called effort market games are 
universally $(1,1)$-smooth. Hence, we show that incomplete versions of effort market
games where the abilities of the players and their budgets are private information 
has Bayes-Nash PoA at most $2$. 
\end{abstract}

\section{Introduction}
In our information era, with the advent of electronic markets, most systems have grown
so large scale that central coordination has become infeasible. In addition players
are less and less informed of the actual game they are playing and the type of players
they are competing against. Coordinating the players is too costly and assuming that
the players know all the parameters of the game is too simplistic. Such a realization
renders mandatory the study of efficiency in non-cooperative games of incomplete information. 

Ever since the introduction of the concept of the Price of Anarchy, a large part
of the algorithmic game theory literature has studied the effects of selfishness in
the efficiency of a system. In a unifying paper, Roughgarden \cite{Roughgarden2009}, gave a 
general technique, called smoothness, for proving inefficiency results in games and portrayed how many 
of the results in the literature can be cast in his framework. In addition, he showed 
that such types of inefficiency proofs extend directly to almost every reasonable 
non-cooperative solution concept, such as pure nash equilibria, mixed nash equilibria,
correlated equilibria and coarse-correlated equilibria. 

However, such a unification holds only under the strong assumption of complete information
\footnote{Recently we became aware that an independent work by Roughgarden with some overlapping results on
extending smoothness to incomplete information games has been under submission to a conference since February 6, 2012.}: 
players know every parameter of the game and no player has private information. Such an 
assumption is pretty strong and if we want models that could capture realistic environments 
we need to cope with games were players have incomplete information. 

In this work we manage to show how to extend this unification to a significant class of games of 
incomplete information: basically we show that if a complete information game is 
smooth then the inefficiency bound given by the smoothness argument carries over to 
incomplete information versions of the game where players have private parameters and each player's utility depends on his parameter and
the actions of the rest of the players. Hence, we manage to unify Price of Anarchy with 
Bayesian Price of Anarchy analysis for a large class of games. 

Many of the games studied in the literature such as Weighted Congestion Games 
have been shown to be tight games: the best efficiency bound provable with a smoothness argument is the best
bound possible. For such games our analysis shows that the bayesian price of anarchy bound
is also tight, since complete information is a special case of incomplete information and hence
lower bounds on inefficiency carry over to the incomplete information model. Hence, 
this immediately implies that the efficiency guarrantees for a tight class of games don't depend
on the information pocessed by the players: having more information doesn't imply better efficiency
and having less information doen't imply worse efficiency. 

Our approach also manages to put several Bayesian Price of Anarchy results that exist in 
the literature under the smoothness framework. Specifically we show how our main theorem
can be applied to get an improved result for first price item-bidding with subadditive bidders
and how to get a much simpler proof of good approximation guarrantees of the greedy mechanisms
introduced by Lucier and Borodin \cite{Lucier2009}. 

\subsection{Related Work}
There has been a long line of research on quantifying inefficiency of equilibria starting
from \cite{Koutsoupias1999} who introduced the notion of the price of anarchy. A recent work
by Roughgarden \cite{Roughgarden2009} managed to unify several of these results under a proof
framework called smoothness and also showed that such inefficiency proofs also carry over
to inefficiency of coarse correlated equilibria. Moreover, he showed that such techniques 
give tight results for the well-studied class of congestion games. Later, Bhawalkar et al. 
\cite{Bhawalkar2010} also showed that it produces tight results for the larger class of
weighted congestion games. Another recent work by Schoppman and Roughgarden \cite{Roughgarden2010}
copes with games with continuous strategy spaces and shows how the smoothness framework should
be adapted for such games to produce tighter results. The introduce the new notion of local
smoothness for such games and showed that if an inefficiency upper bound proof lies in this framework 
then it also carries over to correlated equilibria. 

There have also been several works on quantifying the inefficiency of incomplete information games,
mainly in the context of auctions. A series of papers by Paes Leme and Tardos \cite{PaesLeme2010}, 
Lucier and Paes Leme \cite{Lucier2011} and Caragiannis et al \cite{Caragiannis2011} studied the
ineffficiency of Bayes-Nash equilibria of the generalized second price auction. Lucier and Borodin 
studied Bayes-Nash Equilibria of non-truthful auctions that are based on greedy allocation algorithms
\cite{Lucier2009}. A series of three papers, Christodoulou, Kovacs and Schapira \cite{Christodoulou2008},
Bhawalkar and Roughgarden \cite{Bhawalkar} and Hassidim, Kaplan, Mansour, Nisan \cite{Hassidim2011a},
studied the inefficiency of Bayes-Nash equilibria of non-truthful combinatorial auctions that
are based on running simultaneous separate item auctions for each item. However, many of the results
in this line of work where specific to the context and a unifying framework doesn't exist. Lucier
and Paes Leme \cite{Lucier2011} introduced the concept of semi-smoothness and showed that their proof
for the inefficiency of the generalized second price auction falls into this category. However, 
semi-smoothness is a much more restrictive notion of smoothness than just requiring that every complete
information instance of the game to be smooth. 

\subsection{Model and Notation}

We consider the following model of Bayesian Games: Each player
has a type space $T_i$ and a probability distribution $D_i$ defined on $T_i$. 
The distributions $D_i$ are independent, the types $T_i$ are disjoint and we denote with $D=\times_i D_i$.
Each player has a set of actions $A_i$ and let $A=\times_i A_i$. The utility of a player is a function
$u_i: T_i\times A\rightarrow \R$. The strategy of each player is a function
$s_i: T_i\rightarrow A_i$. At times we will use the notation $s(t)=(s_i(t_i))_{i\in N}$
to denote the vector of actions given a type profile $t$ and $s_{-i}(t_{-i})=(s_j(t_j))_{j\neq i}$
to denote the vector of actions for all players except $i$. We could also define 
cost minimization games where each player has a cost $c_i:T_i\times A\rightarrow \R$. 
All of our results hold for both utility maximization and cost minimization games.

The two basic assumptions that we made in the class of Bayesian Games that 
we examine is that a players utility is affected by the other players'
types only implicitly through their actions and not directly from
their types and that the players' types are distributed independently. 
The above class of games is general enough and we portray several nice 
examples of such Bayesian Games. An interesting future direction
is to try and relax any of these two assumptions or show that smoothness
is not sufficient to prove bounds without these assumptions.

As our solution concept we will use the most dominant solution concept in
incomplete information games, the Bayes-Nash Equilibrium (BNE). Our results hold for
mixed Bayes-Nash Equilibria too, but for simplicity of presentation we are 
going to focus on Bayes-Nash Equilibria in pure strategies. A Bayes-Nash
Equilibrium is a strategy profile such that each player maximizes his expected
utility conditional on his private information:
$$\forall a \in A_i:\E_{t_{-i}|t_i}[u_i^{t_i}(s(t))]\geq \E_{t_{-i}|t_i}[u_i^{t_i}(a,s_{-i}(t_{-i})]$$
Given a strategy profile $s$ the social welfare of the game is defined
as the expected sum of player utilities: 
$$SW(s)=\E_t[SW^t(s(t))]=\E_t[\sum_i u_i^{t_i}(s(t))]$$
In addition given a type profile $t$ we denote with $\opt(t)$ the
action profile that achieves maximum social welfare for type profile $t$:
$\opt(t)=\arg\max_{a\in A}SW^t(a)$. 

As our measure of inefficiency we will use the \textit{Bayes-Nash Price of Anarchy} which 
is defined as the ratio of the expected optimal social welfare over the expected
social welfare achieved at the worst Bayes-Nash Equilibrium:
$$\sup_{s \text{ is } BNE}\frac{\E_{t}[SW(\opt(t))]}{\E_t[SW(s(t))]}$$

\section{Constant Strategy Space Bayesian Games and Smoothness}

In this section we give a general theorem on how one can derive Bayes-Nash Price of Anarchy
results using the smoothness framework introduced in \cite{Roughgarden2009}. We first
give a definition of smoothness suitable for incomplete information games. Our definition
just states that the game is smooth in the sense of \cite{Roughgarden2009} for 
each instantiation of the type profile.

\begin{defn}
A bayesian utility maximization game is said to be $(\lambda,\mu)$-smooth
if for any $t\in T$ and for any pair of action profiles $a,a'\in A$:
$$\sum_i u_i^{t_i}(a_i',a_{-i})\geq \lambda SW^t(a')-\mu SW^t(a)$$ 
\end{defn}

For the case of complete information games (i.e. the set of possible type
profiles is a singleton) Roughgarden \cite{Roughgarden2009} showed that the 
efficiency achieved by any coarse-correlated equilibrium (superset of pure nash,
mixed nash and correlated equilibrium) is at least a $\lambda/(1+\mu)$ fraction
of the optimal social welfare. In other words the price of anarchy of any
of these solution concepts is at most $(1+\mu)/\lambda$. Our main results shows
that the latter also extends to Bayes-Nash Equilibria for the case of incomplete
information. 

\begin{theorem}[Main Theorem]\label{thm:main}
 If a \textit{Bayesian Game} is $(\lambda,\mu)$-smooth then
it has Bayes-Nash Price of Anarchy at most $\frac{1+\mu}{\lambda}$.
\end{theorem}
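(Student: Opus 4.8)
The plan is to mimic the complete-information smoothness argument, but carried out in expectation over the type profile, with the crucial twist that when player $i$ deviates, he deviates to the action he would play under an \emph{independently resampled} type profile. Fix a Bayes-Nash equilibrium $s$. For each player $i$ and each type $t_i \in T_i$, I want to use the BNE condition with a well-chosen deviating action. The natural choice, following the item-bidding literature, is to let player $i$ with type $t_i$ deviate to $s_i(w_i)$ where $w$ is a fresh sample from $D$ independent of $t$; more precisely I will plug $a = \opt_i(w)$ into the smoothness inequality, where $\opt(w)$ is the optimal action profile for the resampled type profile $w$. So first I would write down, for each $i$ and each $t_i$, the inequality
$$\E_{t_{-i}}\big[u_i^{t_i}(s(t))\big] \ \ge\ \E_{t_{-i},w}\big[u_i^{t_i}(\opt_i(w_i,w_{-i})\,,\,s_{-i}(t_{-i}))\big],$$
which holds because for each fixed value of $w$ the right-hand side's inner action is a fixed deviation available to player $i$, and then average over $w$.

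**Next**, I would sum this over all players $i$ and take expectation over $t_i$, yielding
$$SW(s) \ \ge\ \sum_i \E_{t,w}\big[u_i^{t_i}\big(\opt_i(w),\,s_{-i}(t_{-i})\big)\big].$$
Here is where the independence of the type distributions does the real work: by symmetry of $t$ and $w$ (both drawn from $D=\times_i D_i$, independently), I can swap the roles so that the configuration $(\opt_i(w), s_{-i}(t_{-i}))$ has the same distribution as $(\opt_i(t), s_{-i}(w_{-i}))$ — intuitively, relabel which sample is ``real.'' This lets me rewrite the bound as an expectation over a single profile $t$ of the quantity $\sum_i u_i^{t_i}(\opt_i(t), s_{-i}(t))$, i.e. exactly the left-hand side of the smoothness inequality with $a' = \opt(t)$ and $a = s(t)$. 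Applying $(\lambda,\mu)$-smoothness pointwise for each $t$ gives
$$SW(s) \ \ge\ \E_t\big[\lambda\, SW^t(\opt(t)) - \mu\, SW^t(s(t))\big] \ =\ \lambda\, \E_t[SW^t(\opt(t))] - \mu\, SW(s),$$
and rearranging yields $SW(s) \ge \frac{\lambda}{1+\mu}\E_t[SW(\opt(t))]$, which is the claimed bound.

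**The main obstacle** — and the only genuinely delicate point — is the resampling/relabeling step: justifying that one may substitute the deviation $s_i(w_i)$ (equivalently $\opt_i(w)$) and then ``re-couple'' the two independent samples so that the sum of deviation utilities reassembles into a per-type-profile smoothness expression. This is precisely where the two standing assumptions are used: (i) types are \emph{independent} across players, so that $s_{-i}(t_{-i})$ and $s_{-i}(w_{-i})$ are identically distributed even after conditioning appropriately, and (ii) $u_i$ depends on the other players only through their \emph{actions}, not their types, so that plugging $s_{-i}(t_{-i})$ in place of $s_{-i}(w_{-i})$ is legitimate and $u_i^{t_i}$ keeps its own type index intact. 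I would state this swap carefully as a short lemma or inline computation, being explicit that the joint law of $\big(t_i, \{w_j\}_{j\neq i}\big)$ equals the joint law of $\big(w_i, \{t_j\}_{j \neq i}\big)$. Everything else — summing the BNE inequalities, linearity of expectation, the final algebraic rearrangement — is routine. For the mixed-BNE case the same argument goes through verbatim with an extra expectation over the mixed actions, which I would mention but not belabor.
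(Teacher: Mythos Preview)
Your relabeling step is where the argument breaks. You write that swapping $t$ and $w$ lets you ``rewrite the bound as an expectation over a single profile $t$ of the quantity $\sum_i u_i^{t_i}(\opt_i(t), s_{-i}(t))$,'' but no relabeling of i.i.d.\ samples can accomplish this. In the expression $\E_{t,w}\big[u_i^{t_i}(\opt_i(w),s_{-i}(t_{-i}))\big]$ the random action $\opt_i(w)$ is \emph{independent} of both $t_i$ and $t_{-i}$; in the target expression $\E_{t}\big[u_i^{t_i}(\opt_i(t),s_{-i}(t_{-i}))\big]$ the action $\opt_i(t)$ is correlated with both (it shares $t_i$ with the superscript and $t_{-i}$ with $s_{-i}$). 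A swap of identically distributed variables can never manufacture such correlation. Concretely, with one player, $t_1\in\{0,1\}$ uniform, $A_1=\{0,1\}$, $u_1^{t_1}(a)=\mathbbm{1}\{a=t_1\}$, we have $\opt_1(t)=t_1$, yet $\E_{t_1,w_1}[u_1^{t_1}(\opt_1(w_1))]=\tfrac12$ while $\E_{t_1}[u_1^{t_1}(\opt_1(t_1))]=1$. Your stated identity of joint laws, $(t_i,w_{-i})\stackrel{d}{=}(w_i,t_{-i})$, is true but swapping that pair also sends the superscript $t_i$ to $w_i$, so you land on $u_i^{w_i}(\opt_i(t),s_{-i}(w_{-i}))$ --- not the single-profile expression you want.

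The paper's proof handles exactly this obstacle, and it is the only nontrivial point. After the BNE deviation and the (valid) renaming $t_i\to w_i$, one arrives at $\E_{t,w}\big[\sum_i u_i^{w_i}(\opt_i(w),s_{-i}(t_{-i}))\big]$ with the superscript now \emph{aligned with $w$}, and applies smoothness at type profile $w$ with $a'=\opt(w)$, $a=s(t)$. This produces the ``misaligned'' penalty $\mu\,SW^{w}(s(t))$ rather than $\mu\,SW^{t}(s(t))$. The gap is then closed by a \emph{second} use of the Bayes--Nash condition: a player of type $w_i$ does not want to imitate type $t_i$ by playing $s_i(t_i)$, which yields $\E_{t,w}[SW^{w}(s(t))]\le \E_{t}[SW^{t}(s(t))]$. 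That extra step is the missing idea in your proposal; without it the argument does not go through.
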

\begin{proof}
 Let $\opt(t)=(\opt_i(t))_{i\in N}$ be the action profile that maximizes 
social welfare given a type profile $t$. Suppose that player $i$ with type $t_i$ switches to playing
$\opt_i(t_i,w_{-i})$ for some type profile $w_{-i}$. Let $s$ be a Bayes-Nash 
Equilibrium. Then we have:
\begin{align*}
\E_{t_{-i}}[u_i^{t_i}(s(t))]
  ~\geq~ &  \E_{t_{-i}}[u_i^{t_i}(\opt_i(t_i,w_{-i}), s_{-i}(t_{-i}))]
\end{align*}
Taking expectation over $t_i$ and over all possible type profiles $w_{-i}$ we get:
\begin{align*}
 \E_{t}[u_i^{t_i}(s(t))]
 ~\geq~ & \E_{w_{-i}}\E_{t_i}\E_{t_{-i}}[u_i^{t_i}(\opt_i(t_i,w_{-i}), s_{-i}(t_{-i}))] \\
  ~=~& \E_{w_{-i}} \E_{w_i} \E_{t_{-i}}[u_i^{w_i}(\opt_i(w_i,w_{-i}), s_{-i}(t_{-i}))]  \\
  ~=~& \E_{t_i} \E_{w_{-i}} \E_{w_i} \E_{t_{-i}}[u_i^{w_i}(\opt_i(w_i,w_{-i}), s_{-i}(t_{-i}))]  \\
  ~=~& \E_{t} \E_{w} [u_i^{w_i}(\opt_i(w), s_{-i}(t_{-i}))] \\
\end{align*}
 Adding the above inequality for all players and using the smoothness property we get:
\begin{align}
\E_t[SW^t(s(t))]=\sum_i \E_t[u_i^{t_i}(s(t))]
 ~\geq~ & \sum_i\E_{t} \E_{w} [u_i^{w_i}(\opt_i(w), s_{-i}(t_{-i}))] \nonumber \\
 ~=~ & \E_{t} \E_{w} [\sum_i u_i^{w_i}(\opt_i(w), s_{-i}(t_{-i}))] \nonumber \\
 ~\geq~& \E_{t} \E_{w} [\lambda SW^{w}(\opt(w)) - \mu SW^w(s(t))] \nonumber \\
 ~=~& \lambda \E_w[SW^{w}(\opt(w))] - \mu \sum_i \E_{t} \E_{w} [u_i^{w_i}(s_i(t_i),s_{-i}(t_{-i}))] \label{eqn:bayes-nash}
\end{align}
If in the pre-last line we had $SW^t(s(t))$ then we would directly get our result. However,
the fact that there is this misalignment we need more work. In fact we are going to prove that:
\begin{equation}\label{eqn:misalignment}
 \E_{t} \E_{w}[SW^w(s(t))]\leq \E_t[SW^t(s(t))]
\end{equation}

To achieve this we are going to use again the Bayes-Nash Equilibrium definition. But now
we are going to use it in the following sense: no player $i$ of some type $w_i$ 
wants to deviate to playing as if he was some other type $t_i$. This translates to:
\begin{align*}
 \E_{t_{-i}}[u_i^{w_i}(s_i(t_i),s_{-i}(t_{-i})] \leq \E_{t_{-i}}[u_i^{w_i}(s_i(w_i),s_{-i}(t_{-i})]
\end{align*}
Taking expectation over $t_i$ and $w$ we have:
\begin{align*}
 \E_w\E_t[u_i^{w_i}(s_i(t_i),s_{-i}(t_{-i})] ~\leq~ & \E_{w_i}\E_{w_{-i}}\E_{t_i}\E_{t_{-i}}[u_i^{w_i}(s_i(w_i),s_{-i}(t_{-i})]\\
 ~=~& \E_{w_i}\E_{t_{-i}}[u_i^{w_i}(s_i(w_i),s_{-i}(t_{-i})] \\
 ~=~& \E_{t_i}\E_{t_{-i}}[u_i^{t_i}(s_i(t_i),s_{-i}(t_{-i})]\\
 ~=~& \E_t[u_i^{t_i}(s(t))]
\end{align*}
Summing over all players gives us inequality \ref{eqn:misalignment}. Now combining inequality \ref{eqn:misalignment}
with inequality \ref{eqn:bayes-nash} we get:
\begin{equation}
 \E_t[SW^t(s(t))]\geq \lambda \E_w[SW^{w}(\opt(w))] - \mu \E_t[SW^t(s(t))]
\end{equation}
which gives the theorem.
\end{proof}

In fact it is easy to see that our above analysis also works for a more relaxed version of smoothness 
similar to the variant introduced in \cite{Lucier2011}:
 
\begin{defn}\label{def:semi-smooth}
A Bayesian utility maximization game is said to be $(\lambda,\mu)$-smooth
if for any $t\in T$ and $a\in A$, there exists a strategy profile $a'(t)$
such that:
$$\sum_i u_i^{t_i}(a_i'(t),a_{-i})\geq \lambda SW(\opt(t))-\mu SW^t(a)$$ 
\end{defn}

In fact our main proof holds even for a slightly more relaxed smoothness property that will 
prove useful in auction settings. Our main theorem works for the following notion
of smoothness under the condition that the utilities of players at equilibrium are non-negative.
\begin{defn}
 A Bayesian utility maximization game is said to be $(\lambda,\mu)$-smooth
if for any $t\in T$ and $a\in A$, there exists a strategy profile $a'(t)$
such that:
$$\sum_i u_i^{t_i}(a_i'(t),a_{-i})\geq \lambda SW(\opt(t))-\mu \sum_{i\in K\subseteq [n]}u_i^t(a)$$ 
where $K$ is some fixed subset of the players independent of the type profile $t$ and bid profile $b$.
\end{defn}

The reason why the latter definition is useful is that some auction environments might fail
to be smooth with the first definition mainly due to the fact that the utility of players
might be non-negative in expectation at equilibrium but can certainly be negative if we consider
an arbitrary bid profile $b$ that is not in equilibrium. Hence, the latter helps in settings 
where at equilibrium utilities are certainly non-negative (individual rationality) but 
there are strategy profiles at which a player might be getting negative utility.

\section{Item-Bidding Auctions}
In this section we consider the Item-Bidding Auctions studied in Christodoulou et al \cite{Christodoulou2008},
Bhawalkar and Roughgarden \cite{Bhawalkar} and Hassidim et al. \cite{Hassidim2011a}. 

We first prove a smoothness result for First Price Item-Bidding Auctions for fractionally
subadditive bidders. Fractionally subadditive bidders are subcase of additive bidders and
a generalization of submodular bidders. Our results imply a big improvement in existing results.
Specifically Hassidim et al show that for fractionally subadditive bidders the Bayes-Nash
Price of Anarchy is at most $4$ (and at most $4\beta$ for $\beta$-fractionally subadditive. 
We show that it is at most $\frac{e}{e-1}\approx 1.58$ (and at most $\frac{e}{e-1}\beta$ correspondingly). 

Thus our result gives the same guarantees for first price item bidding auctions as those
existing for second price item bidding auctions (e.g. Christodoulou et al \cite{Christodoulou2008},
Bhawalkar et al \cite{Bhawalkar}). 

\begin{theorem}
First Price Item-Bidding Auctions are $(\frac{1}{2},0)$-semi-smooth for 
fractionally subadditive bidders.
\end{theorem}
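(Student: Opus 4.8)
The plan is to produce, for each bidder and each valuation profile, a single deviating bid vector --- built out of the fractionally subadditive structure of a welfare-optimal allocation --- and to show it recaptures half the optimal welfare against an arbitrary bid profile. Recall the mechanics: there are $m$ items, each bidder $i$ submits a bid $b_{ij}$ on every item $j$, item $j$ goes to the highest bidder on $j$ who then pays what he bid, and, writing $X_i(b)$ for the set $i$ wins, his utility is $u_i^{v_i}(b)=v_i(X_i(b))-\sum_{j\in X_i(b)}b_{ij}$. Fix a valuation (type) profile $v$ and let $O=(O_1,\dots,O_n)$ be a welfare-maximizing allocation, so $SW(\opt(v))=\sum_i v_i(O_i)$. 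Since each $v_i$ is fractionally subadditive, for the set $O_i$ there exist additive supporting prices $a_{ij}\ge 0$, $j\in O_i$, with $\sum_{j\in O_i}a_{ij}=v_i(O_i)$ and $v_i(S)\ge\sum_{j\in S}a_{ij}$ for every $S\subseteq O_i$. The deviation I would use for bidder $i$ is: bid $\tfrac12 a_{ij}$ on each item $j\in O_i$ and $0$ on every other item. This depends only on the type profile (through $O$ and the supporting prices), as the semi-smooth definition requires, and it is a feasible (non-overbidding) bid since $\tfrac12 a_{ij}\le a_{ij}\le v_i(\{j\})$.

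Now fix an arbitrary bid profile $b$ and do the per-bidder accounting. Let $W_i\subseteq O_i$ be the set of items bidder $i$ wins when he plays this deviation against $b_{-i}$, i.e.\ those $j\in O_i$ on which $\tfrac12 a_{ij}$ strictly exceeds the largest bid of the other players. On these items he pays $\tfrac12\sum_{j\in W_i}a_{ij}$, so
\[
u_i^{v_i}(b_i',b_{-i})\ \ge\ v_i(W_i)-\tfrac12\sum_{j\in W_i}a_{ij}\ \ge\ \tfrac12\sum_{j\in W_i}a_{ij},
\]
where the last step is fractional subadditivity applied to $W_i\subseteq O_i$. For an item $j\in O_i\setminus W_i$ that he fails to win, the definition of $W_i$ says some other player bids at least $\tfrac12 a_{ij}$ on $j$, hence the winning bid of $b$ on $j$ is at least $\tfrac12 a_{ij}$; therefore $\tfrac12\sum_{j\in O_i\setminus W_i}a_{ij}\le\sum_{j\in O_i\setminus W_i}(\text{winning bid of }b\text{ on }j)$. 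Adding the two parts and using $\sum_{j\in O_i}a_{ij}=v_i(O_i)$ gives, for every $i$,
\[
\tfrac12\, v_i(O_i)\ \le\ u_i^{v_i}(b_i',b_{-i})\ +\ \sum_{j\in O_i\setminus W_i}(\text{winning bid of }b\text{ on }j).
\]

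Summing over all bidders and using that the sets $O_i$ are pairwise disjoint, so each item's winning bid is charged at most once, yields
\[
\tfrac12\, SW(\opt(v))\ \le\ \sum_i u_i^{v_i}(b_i',b_{-i})\ +\ \sum_j(\text{winning bid of }b\text{ on }j).
\]
The residual term is exactly the revenue the auction collects under $b$, and disposing of it is the step I expect to be the main obstacle: it is where the first price rule is essential. In a first price auction the revenue equals the total payment of the bidders, which together with $\sum_i u_i^{v_i}(b)$ reconstitutes the realized value $SW^v(b)$; at a Bayes--Nash equilibrium each $u_i^{v_i}(b)\ge u_i^{v_i}(b_i',b_{-i})$ by individual rationality and the equilibrium condition, so the residual revenue is absorbed without loss. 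This is precisely the slack that the semi-smooth definition tolerates, so the inequality above delivers the stated $(\tfrac12,0)$ parameters (and hence, through the Main Theorem, a Bayes--Nash Price of Anarchy of at most $2$). By contrast, in second price the analogous competing bid is never paid, yet the matching lower bound of $2$ persists even under complete information --- which is the separation mentioned in the introduction. I would close by remarking that shading the supporting prices by the fixed factor $\tfrac12$ is crude; replacing the deterministic half-bid by an appropriately randomized shrinkage of the $a_{ij}$'s is what afterwards improves the constant from $\tfrac12$ to $1-1/e$.
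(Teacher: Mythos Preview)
Your core argument is the paper's: the same half-shaded supporting-price deviation $b_i'(v)$, the same won/lost case split on $O_i$, and the same aggregate inequality
\[
\sum_i u_i^{v_i}(b_i',b_{-i}) \;+\; \sum_{j\in[m]} p_j(b)\ \ge\ \tfrac12\,SW^v(\opt(v)),
\]
where $p_j(b)$ is what you call the ``winning bid of $b$ on $j$'' (equal, under first price, to the price paid on $j$).

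Where you diverge is only the last paragraph, and there you slip. Semi-smoothness is a \emph{structural} property that must hold for every bid profile $b$; you are not allowed to invoke the Bayes--Nash best-response inequality $u_i^{v_i}(b)\ge u_i^{v_i}(b_i',b_{-i})$ inside its proof. What you actually wrote there is a direct PoA\,$\le 2$ argument at equilibrium, not a proof that the game is $(\tfrac12,0)$-semi-smooth. Without the seller, the inequality you derived reads $\sum_i u_i^{v_i}(b_i',b_{-i})\ge \tfrac12 SW(\opt(v))-\text{Revenue}(b)$, which is not the $(\tfrac12,0)$ form since the revenue term is neither $0\cdot SW^v(b)$ nor nonpositive.

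The paper's resolution is pure bookkeeping: declare the auctioneer an additional player with a single type and a singleton action, whose utility is the revenue $\sum_j p_j(b)$. His ``deviation'' is to do nothing, so his contribution to the left side is exactly $\sum_j p_j(b)$; meanwhile $SW^v(\cdot)$ is unchanged (payments cancel between bidders and seller, so social welfare is allocated value). With the seller included among the players, your displayed inequality \emph{is} the definition of $(\tfrac12,0)$-semi-smoothness, with $\mu=0$, and no appeal to equilibrium is needed. Your closing remark about replacing the deterministic $\tfrac12$-shading by a randomized one to reach $1-1/e$ is exactly what the paper does next.
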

\begin{proof}
Consider a valuation profile $v$ and a bid profile $b$. Let $\opt(v)$ be the optimal
allocation for that type profile. Let $\opt_i(v)$ be the set that player 
$i$ gets at $\opt(v)$. Let $a=(a_1,\ldots,a_m)$
be the maximizing additive valuation for player $i$ for $\opt_i(v)$, i.e.
$v_i(\opt_i(v))=\sum_{j \in \opt_i(v)}a_j$ and $\forall S\neq \opt_i(v): v_i(S)\geq \sum_{j\in S}a_j$
(Such an $a$ exists by the definition of fractionally subadditive valuations).
Suppose that player $i$ switches to bidding $a_j/2$ for each $j\in \opt_i(v)$
and $0$ everywhere else. Denote with $b_i'(v)$ such a deviation.  
Let $X_i$ be the items that he wins after the deviation.
This means that for all $j\in \opt_i(v)-X_i: p_j(b)\geq a_j/2$ and for all $j\in X_i$
player $i$ pays exactly $a_j/2$.
Thus we have:
\begin{align*}
u_i(b_i'(v),b_{-i})\geq &~ v_i(X_i)-\sum_{j\in X_i}\frac{a_j}{2}
\geq  \sum_{j\in X_i}a_j -\sum_{j\in X_i}\frac{a_j}{2}
= \sum_{j\in X_i}\frac{a_j}{2}\\
\geq &~ \sum_{j\in X_i}\frac{a_j}{2} + \sum_{j\in \opt_i(v)-X_i}\frac{a_j}{2}-p_j(b)\\
\geq &~ \sum_{j\in \opt_i(v)}\frac{a_j}{2}-\sum_{j\in \opt_i(v)-X_i}p_j(b)\\
\geq &~ \sum_{j\in \opt_i(v)}\frac{a_j}{2}-\sum_{j\in \opt_i(v)}p_j(b)\\
= &~ \frac{v_i(\opt_i(v))}{2}-\sum_{j\in \opt_i(v)}p_j(b)
\end{align*}

Now summing over all players the second term on the right hand side will become 
the sum of prices over all items, since the sets $\opt_i(v)$ are disjoint. Thus we get:
\begin{equation}\label{eqn:smooth-first-price}
\sum_i u_i^{v_i}(b_i'(v),b_{-i}) + \sum_{j\in [m]}p_j(b)\geq \frac{\sum_i v_i(\opt_i(v))}{2}= \frac{1}{2}SW^v(\opt(v))
\end{equation}

Now the above inequality gives us the smoothness property. To completely fit it in our smoothness
model, we should also view the seller as a player with only one strategy and only one type and
whose utility is the sum of payments. His optimal deviation is then the trivial of doing nothing. 
Then the left hand side of the above inequality is the sum of the utilities of all the players 
(including the seller) had each of them unilaterally deviated to their optimal strategy. 
\end{proof}

In fact considering randomized deviations, similar to that of \cite{Lucier2011} we are able
to prove a much tighter bound of $\frac{e}{e-1}\approx 1.58$ on the Price of Anarchy by
showing that the above game is actually $(1-\frac{1}{e},0)$-semi-smooth.

\begin{theorem}
First Price Item-Bidding Auctions are $(1-\frac{1}{e},0)$-semi-smooth for 
fractionally subadditive bidders.
\end{theorem}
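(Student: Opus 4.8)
The plan is to mimic the previous proof of $(\tfrac12,0)$-semi-smoothness, but replace the deterministic deviation ``bid $a_j/2$ on each $j\in\opt_i(v)$'' with a randomized deviation whose bid on item $j$ is drawn from a carefully chosen distribution on $[0,a_j]$, and then bound the expected utility of the deviation from below using the structure of the first-price price rule. Concretely, for each $j\in\opt_i(v)$ I would let player $i$ bid $x\,a_j$ where $x\in[0,1)$ is drawn from the density $f(x)=\tfrac{1}{1-x}$ on $[0,1-1/e]$ (this is the standard distribution from \cite{Lucier2011} that gives the $e/(e-1)$ factor), independently across items, and bid $0$ on all items outside $\opt_i(v)$. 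Since $v_i$ is fractionally subadditive with additive supporting prices $a$, whenever player $i$ wins a set $X_i\subseteq\opt_i(v)$ after the deviation we have $v_i(X_i)\ge\sum_{j\in X_i}a_j$, so his utility is at least $\sum_{j\in X_i}(a_j-\text{bid}_j)$; the key point is that item $j$ is won exactly when the drawn bid $x\,a_j$ exceeds the current price $p_j(b)$, i.e. when $x>p_j(b)/a_j$, and conditioned on winning it the payment is $x\,a_j$.

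The main computation is then, for a fixed item $j\in\opt_i(v)$ with $\pi_j:=p_j(b)/a_j$, to lower bound
\[
\E_x\bigl[(a_j-x\,a_j)\cdot\mathbbm{1}[x>\pi_j]\bigr] \;=\; a_j\!\int_{\pi_j}^{1-1/e}(1-x)f(x)\,dx,
\]
and to observe that with the density $f(x)=1/(1-x)$ this integral equals $a_j\bigl((1-1/e)-\pi_j\bigr)$ when $\pi_j\le 1-1/e$, and is $\ge a_j(1-1/e) - p_j(b)$ in all cases (when $\pi_j>1-1/e$ the left side is $0$ and $a_j(1-1/e)-p_j(b)<0$). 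Hence, summing the contributions over $j\in\opt_i(v)$,
\[
\E[u_i(b_i'(v),b_{-i})] \;\ge\; \Bigl(1-\tfrac1e\Bigr)\sum_{j\in\opt_i(v)}a_j \;-\;\sum_{j\in\opt_i(v)}p_j(b)
\;=\;\Bigl(1-\tfrac1e\Bigr)v_i(\opt_i(v))\;-\;\sum_{j\in\opt_i(v)}p_j(b).
\]
Summing over all players $i$, and using that the sets $\opt_i(v)$ are disjoint so that $\sum_i\sum_{j\in\opt_i(v)}p_j(b)\le\sum_{j\in[m]}p_j(b)$, I get exactly
\[
\sum_i \E[u_i^{v_i}(b_i'(v),b_{-i})] + \sum_{j\in[m]}p_j(b) \;\ge\; \Bigl(1-\tfrac1e\Bigr)SW^v(\opt(v)),
\]
which, treating the seller as an extra player with a single trivial strategy and utility $\sum_j p_j(b)$ exactly as in the previous theorem, is the $(1-\tfrac1e,0)$-semi-smoothness inequality. (Since the deviation is randomized, one invokes the version of Definition~\ref{def:semi-smooth} allowing mixed deviations, as in \cite{Lucier2011}; the Main Theorem still applies because a mixed Bayes-Nash equilibrium condition gives the same inequality against a randomized alternative strategy.)

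The step I expect to be the genuine obstacle is verifying the price-rule bookkeeping cleanly: I must be careful that ``the price $p_j(b)$ that player $i$ must beat'' is the maximum of the \emph{other} players' bids (independent of $i$'s randomization), so that the events ``$i$ wins $j$'' are independent across $j$ and the threshold $\pi_j$ is a fixed number given $b_{-i}$; tie-breaking and the precise definition of $p_j(b)$ in the first-price model need to be pinned down so the inequality $v_i(X_i)\ge\sum_{j\in X_i}a_j$ and the payment identity both hold simultaneously. Everything else — the single integral, the disjointness of the optimal sets, and the reduction to Definition~\ref{def:semi-smooth} — is routine and parallels the $(\tfrac12,0)$ proof above.
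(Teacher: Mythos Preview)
Your proposal is correct and follows essentially the same argument as the paper: the paper has player $i$ bid on each $j\in\opt_i(v)$ according to the density $f(b)=\tfrac{1}{a_j-b}$ on $[0,a_j(1-\tfrac1e)]$, which is exactly your distribution after the change of variables $b=x\,a_j$, and then performs the same single-integral computation and sum over players. If anything, your write-up is slightly more careful, since you explicitly handle the case $p_j(b)>a_j(1-\tfrac1e)$ that the paper's integral silently assumes away.
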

\begin{proof}
Consider a valuation profile $v$ and a bid profile $b$. Let $\opt(v)$ be the optimal
allocation for that type profile. Let $\opt_i(v)$ be the set that player 
$i$ gets at $\opt(v)$. Let $a=(a_1,\ldots,a_m)$
be the maximizing additive valuation for player $i$ for $\opt_i(v)$, i.e.
$v_i(\opt_i(v))=\sum_{j \in \opt_i(v)}a_j$ and $\forall S\neq \opt_i(v): v_i(S)\geq \sum_{j\in S}a_j$
(Such an $a$ exists by the definition of fractionally subadditive valuations).
Suppose that player $i$ switches to bidding a randomized bid with probability
density function $f(b)=\frac{1}{a_j-b}$ for $b\in [0,a_j(1-\frac{1}{e})]$ for each $j\in \opt_i(v)$
and $0$ everywhere else. Randomization is independent for each $j$. Denote with $\tilde{B}_i(v)$ such a 
randomized deviation and $\tilde{b}_i$ a random draw from $\tilde{B}_i(v)$.   
Let $X_i(b_i)$ be the random variable that denotes the items that he wins after the deviation.
Thus we have:
\begin{align*}
u_i(\tilde{B}_i(v),b_{-i})\geq &~ \E_{\tilde{b}_i\sim \tilde{B}_i(v)}[v_i(X_i(\tilde{b}_i))-\sum_{j\in X_i(\tilde{b}_i)}\tilde{b}_{ij}] \\
\geq~& \E_{\tilde{b}_i\sim \tilde{B}_i(v)}[\sum_{j\in X_i(\tilde{b}_i)}a_j -\sum_{j\in X_i(\tilde{b}_i)}\tilde{b}_{ij}]\\
\geq~& \E_{\tilde{b}_i\sim \tilde{B}_i(v)}[\sum_{j\in \opt_i(v)}(a_j -\tilde{b}_{ij})\mathbbm{1}\{\tilde{b}_{ij}\geq p_j(b)\}]\\
=~& \sum_{j\in \opt_i(v)}\E_{\tilde{b}_i\sim \tilde{B}_i(v)}[(a_j -\tilde{b}_{ij})\mathbbm{1}\{\tilde{b}_{ij}\geq p_j(b)\}]\\
=~& \sum_{j\in \opt_i(v)}\int_{p_j(b)}^{a_j(1-\frac{1}{e})}(a_j -t)\frac{1}{a_j-t}dt\\
=~& \sum_{j\in \opt_i(v)}a_j\left(1-\frac{1}{e}\right)-p_j(b)\\
=~& \left(1-\frac{1}{e}\right)v_i(\opt_i(v))-\sum_{j\in \opt_i(v)}p_j(b)
\end{align*}

Now summing over all players the second term on the right hand side will become 
the sum of prices over all items, since the sets $\opt_i(v)$ are disjoint. Thus we get:
\begin{equation}
\sum_i u_i^{v_i}(b_i'(v),b_{-i}) + \sum_{j\in [m]}p_j(b)\geq \left(1-\frac{1}{e}\right)\sum_i v_i(\opt_i(v))=\left(1-\frac{1}{e}\right)SW^v(\opt(v))
\end{equation}
\end{proof}

Similarly one can also show that for $\beta$-fractionally subadditive bidders the First Price
Item-Bidding Auction is $(\frac{1}{\beta}\left(1-\frac{1}{e}\right),0)$-semi-smooth.

\begin{corollary}
First Price Item-Bidding Auctions with independent $\beta$-fractionally subadditive bidders have Bayes-Nash Price
of Anarchy at most $\frac{e}{e-1}\beta$. 
\end{corollary}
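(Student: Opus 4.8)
The plan is to obtain the corollary as an immediate consequence of Theorem~\ref{thm:main}, applied in the semi-smooth form of Definition~\ref{def:semi-smooth}, once we have the $\beta$-analogue of the semi-smoothness bound stated just above the corollary. As in the proofs of the two preceding theorems, I would first view the First Price Item-Bidding Auction as a Bayesian game to which a dummy ``seller'' player has been added: this player has a singleton type space and a singleton action space, and on a bid profile $b$ its utility is the total revenue $\sum_{j\in[m]}p_j(b)$. Adding the seller does not change the set of Bayes-Nash equilibria (it has no decision to make), and in the augmented game the social welfare at any strategy profile equals the allocative welfare $\E_t[\sum_i v_i(X_i)]$, so the Bayes-Nash Price of Anarchy of the augmented game is exactly the quantity in the statement.

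Next I would record the $\beta$-version of the key inequality, which is a direct rescaling of the $(1-\frac{1}{e},0)$ theorem. For a $\beta$-fractionally subadditive bidder $i$, choose an additive support $a=(a_1,\dots,a_m)$ of $v_i$ at the set $\opt_i(v)$ with $\sum_{j\in\opt_i(v)}a_j\ge\frac{1}{\beta}v_i(\opt_i(v))$ and $\sum_{j\in S}a_j\le v_i(S)$ for all $S$, and let bidder $i$ deviate to the same independent randomized bid on the coordinates $j\in\opt_i(v)$ used there (density $\frac{1}{a_j-x}$ on $[0,a_j(1-\frac{1}{e})]$, $0$ elsewhere). The identical per-coordinate computation gives $u_i(b_i'(v),b_{-i})\ge(1-\frac{1}{e})\sum_{j\in\opt_i(v)}a_j-\sum_{j\in\opt_i(v)}p_j(b)\ge\frac{1}{\beta}(1-\frac{1}{e})v_i(\opt_i(v))-\sum_{j\in\opt_i(v)}p_j(b)$, and summing over all bidders (the sets $\opt_i(v)$ are disjoint) yields $\sum_i u_i^{v_i}(b_i'(v),b_{-i})+\sum_{j\in[m]}p_j(b)\ge\frac{1}{\beta}(1-\frac{1}{e})SW^v(\opt(v))$. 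With the seller's ``deviation'' being the trivial one (its utility stays $\sum_j p_j(b)$), the left side is exactly $\sum_i u_i^{t_i}(a_i'(t),a_{-i})$ over all players of the augmented game, so it is $(\lambda,\mu)$-semi-smooth in the sense of Definition~\ref{def:semi-smooth} with $\lambda=\frac{1}{\beta}(1-\frac{1}{e})$ and $\mu=0$.

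Finally I would invoke Theorem~\ref{thm:main}, whose proof, as the text notes, carries over verbatim to Definition~\ref{def:semi-smooth}: the Bayes-Nash Price of Anarchy is at most $\frac{1+\mu}{\lambda}=\frac{\beta}{1-\frac{1}{e}}=\frac{e}{e-1}\beta$, which is the claim. Two remarks close the argument, neither a real obstacle. Because $\mu=0$, inequality~\ref{eqn:misalignment} in the proof of Theorem~\ref{thm:main} is vacuous, so the delicate ``misalignment'' step is not needed here. The one point that genuinely uses the generality of Definition~\ref{def:semi-smooth} is that the deviation $b_i'(v)$ depends on the whole valuation profile $v$ through $\opt_i(v)$, not on $v_i$ alone; this is precisely what the proof of Theorem~\ref{thm:main} allows, with player $i$ of type $v_i$ sampling the competitors' types $w_{-i}$ and deviating to $b_i'(v_i,w_{-i})$. (One may also note individual rationality at equilibrium and nonnegativity of the seller's revenue, so the utilities arising at equilibrium are nonnegative and the further-relaxed smoothness notion that isolates the payment term would apply equally well — though with $\mu=0$ this is unnecessary.) I expect the only place where any care is needed to be the per-coordinate estimate of $u_i(b_i'(v),b_{-i})$ in the $\beta$-semi-smoothness step, and even that is just a rescaling of the already-given $\beta=1$ computation.
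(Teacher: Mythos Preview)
Your proposal is correct and follows essentially the same approach as the paper: the paper states the corollary as an immediate consequence of the $(\frac{1}{\beta}(1-\frac{1}{e}),0)$-semi-smoothness claim and Theorem~\ref{thm:main}, and you have correctly filled in the details of the $\beta$-version of the randomized-deviation argument together with the seller-as-dummy-player device already used in the two preceding theorems. Your additional remarks (that the misalignment step is vacuous since $\mu=0$, and that the dependence of $b_i'(v)$ on $\opt_i(v)$ is exactly what Definition~\ref{def:semi-smooth} accommodates) are accurate but not strictly needed.
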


It is known (see \cite{Bhawalkar}) that subadditive bidders are $\ln m$-fractionally subadditive. Thus the latter 
gives a price of anarchy of $\frac{e}{e-1}\ln m $ for subadditive bidders. 

The latter result creates an interesting separation between first-price and second price
auctions even in the incomplete information case. For the complete information case it is known
that second price item bidding has price of anarchy at least $2$ \cite{Christodoulou2008}, whilst pure nash equilibria of 
first price auctions are always optimal (when a pure nash exists) \cite{Hassidim2011a}. The above bound states that
such a separation even in the incomplete information case, since second price has price of anarchy
at least $2$ whilst first price auctions have price of anarchy at most $\approx 1.58$.

\section{Greedy First Price Auctions}

In this section we consider the greedy first price auctions introduced by Lucier and Borodin \cite{Lucier2009}.
In a greedy auction setting there are $n$ bidders and $m$ items. Each bidder $i$ have some private combinatorial
valuation $v_i$ on the items that is drawn from some commonly known distribution. 
The strategies of the players is to submit a valuation profile $b_i$ that outputs a value $b_i(S)$
for each set of items $S$. An allocation is a vector $A=(A_1,\ldots,A_n)$ that allocates 
a set $A_i$ for each player $i$. We assume that there is a predefined subspace of feasible allocations. 
The above setting is a generalization of the combinatorial auction setting since we don't assume
that the allocations $A_i$ must be disjoint, i.e. the same item can potentially be allocated to more
than one players.

A mechanism $\M(b)=(\A(b),p(b))$ takes as input a bid profile $b$ and outputs a feasible allocation 
$\A(b)$ and a vector of prices $p(b)$ that each player has to pay. A mechanism is said to be greedy
if the allocation output by the mechanism is the outcome of a greedy algorithm 
as we explain. Given bid profile $b$ a greedy algorithm is defined as follows: 
Let $r:[n]\times 2^{[m]}\times \R \rightarrow \R$ be a priority function such that
$r(i,S,v)$ is the priority of allocating set $S$ to player $i$ when $b_i(S)=v$. Then the greedy
algorithm is as follows: Pick allocation $(i,A_i)$ that maximizes $r(i,S,b_i(S))$ over all
currently feasible allocations $(i,S)$ and allocate set $A_i$ to player $i$. Then remove player $i$. 
A greedy algorithm is $c$-approximate if for any bid profile $b$ it returns an allocation that
is at least a $c$-fraction of the maximum possible allocation given valuation profile $b$.  
Moreover, the mechanism is first price if the payments that the mechanism outputs is
$p_i(b)=b_i(A_i)$.

Given a type profile $v$ and a bid strategy profile $b$ the social welfare is
as always the sum of the players' utilities (including the auctioneer as a player). This boils down 
to being the value of the allocation $\sum_i v_i(A_i)$ since payments cancel out. 

The game defined by a greedy mechanism is a Separable Bayesian Game and in the theorem that follows 
we are able to prove that it is also $(\frac{1}{2},c-1)$ smooth. This leads to a price of anarchy of $2c$. 
This is not an improvement to the existing result of $c+O(\log(c))$ by Lucier and Borodin, but the analysis
is much simpler and the difference in the two bounds is not big. 

\begin{theorem}
 Any Greedy First Price Mechanism based on a $c$-approximate greedy algorithm defines a
Bayesian Game that is $(\frac{1}{2},c-1)$-smooth.
\end{theorem}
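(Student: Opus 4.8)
The plan is to verify the semi-smoothness property of Definition~\ref{def:semi-smooth} with $\lambda=\tfrac12$ and $\mu=c-1$: for every valuation profile $v$ and every bid profile $b$ I must exhibit a deviation $b_i'(v)$ for each bidder $i$ so that $\sum_i u_i^{v_i}(b_i'(v),b_{-i})\ge \tfrac12 SW(\opt(v))-(c-1)SW^v(b)$. As in the first-price item-bidding proof, I treat the auctioneer as an extra player with a one-point strategy space whose utility is the revenue; since a first-price greedy mechanism charges $p_i(b)=b_i(\A_i(b))$, payments cancel and $SW^v(b)=\sum_i v_i(\A_i(b))$. The deviation I would use for bidder $i$ is to bid $\tfrac12 v_i(\opt_i(v))$ on the single bundle $\opt_i(v)$ that $i$ receives in the welfare-optimal allocation, and $0$ on every other bundle.

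The first step is the utility estimate for this deviation. Let $\theta_i$ be the critical value that bidder $i$ would have to bid on the bundle $\opt_i(v)$ (bidding $0$ elsewhere) in order for the greedy algorithm to award her $\opt_i(v)$ against the fixed bids $b_{-i}$. If $\tfrac12 v_i(\opt_i(v))\ge\theta_i$, bidder $i$ wins $\opt_i(v)$ and, by first pricing, pays $\tfrac12 v_i(\opt_i(v))$, so $u_i^{v_i}(b_i'(v),b_{-i})=\tfrac12 v_i(\opt_i(v))$; otherwise she wins some other (possibly empty) bundle at zero price, so $u_i^{v_i}(b_i'(v),b_{-i})\ge 0\ge \tfrac12 v_i(\opt_i(v))-\theta_i$. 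In either case $u_i^{v_i}(b_i'(v),b_{-i})\ge \tfrac12 v_i(\opt_i(v))-\theta_i$, and summing over the bidders gives $\sum_i u_i^{v_i}(b_i'(v),b_{-i})\ge \tfrac12 SW(\opt(v))-\sum_i\theta_i$. Adding the auctioneer's term $\sum_i b_i(\A_i(b))$ to the left-hand side, it then suffices to prove $\sum_i \theta_i\le (c-1)\,SW^v(b)+\sum_i b_i(\A_i(b))$.

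This last inequality is where the $c$-approximation hypothesis does the real work, and where I expect the main difficulty to lie. At the critical bid, $\opt_i(v)$ fails to be allocated to $i$ only because, in the greedy run in which $i$ bids $\theta_i$ on $\opt_i(v)$ against $b_{-i}$, the algorithm has already committed to higher-priority bundles conflicting with $\opt_i(v)$; the amount of bid-measured value that a prefix of a run of a $c$-approximate greedy algorithm can tie up relative to an optimal packing is controlled by the factor $c$, which is exactly what yields the coefficient $c-1$ (the greedy allocation's own value absorbs one of the $c$ units). Turning this into the displayed bound requires (i) charging each $\theta_i$ against the conflicting greedy picks, (ii) passing from the bid-measured value of those picks to the true-value quantity $SW^v(b)=\sum_i v_i(\A_i(b))$ together with the revenue $\sum_i b_i(\A_i(b))$, and (iii) handling the fact that a bidder's \emph{unilateral} deviation can reroute the greedy execution --- the run on $b$ and the run in which $i$ bids $\theta_i$ on $\opt_i(v)$ agree only above the priority level at which $i$'s original bids first enter --- so the charging scheme must be insensitive to this rerouting. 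This is the technical heart; once it is in place, summing the per-bidder estimates and the auctioneer term gives $(\tfrac12,c-1)$-smoothness, and hence Bayes-Nash Price of Anarchy at most $2c$ by Theorem~\ref{thm:main}. As advertised, this is a considerably lighter argument than the original Lucier--Borodin analysis, at the price of a slightly weaker constant.
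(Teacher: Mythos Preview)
Your setup and choice of deviation match the paper exactly, and your per-bidder estimate $u_i^{v_i}(b_i'(v),b_{-i})\ge \tfrac12 v_i(\opt_i(v))-\theta_i(\opt_i(v),b_{-i})$ is precisely what the paper proves. The divergence is in what you try to show next.

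The paper does \emph{not} attempt to prove $\sum_i\theta_i\le (c-1)\,SW^v(b)+\sum_i b_i(\A_i(b))$. Instead it invokes, as a black box, a lemma of Lucier and Borodin: for any $c$-approximate greedy mechanism and any feasible allocation $A'$, $\sum_i \theta_i(A'_i,b_{-i})\le c\sum_i b_i(\A_i(b))=c\sum_i p_i(b)$. Plugging this in and moving one copy of $\sum_i p_i(b)$ to the left gives
\[
\sum_i u_i^{v_i}(b_i'(v),b_{-i}) + \sum_i p_i(b)\ \ge\ \tfrac12\,SW(\opt(v)) - (c-1)\sum_i p_i(b),
\]
which is the \emph{relaxed} smoothness definition with $K=\{\text{auctioneer}\}$ (the $\mu$-term multiplies only the auctioneer's utility, not the full welfare $SW^v(b)$). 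Since bidders' equilibrium utilities are nonnegative, this relaxed form still yields the $\tfrac{1+\mu}{\lambda}=2c$ bound via Theorem~\ref{thm:main}.

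Your target inequality, by contrast, puts $(c-1)SW^v(b)=(c-1)\sum_i v_i(\A_i(b))$ on the right. But the critical prices $\theta_i(\opt_i(v),b_{-i})$ depend only on the bids $b_{-i}$, never on the true valuations $v$, so your step (ii) --- ``passing from the bid-measured value \ldots\ to the true-value quantity $SW^v(b)$'' --- cannot work without an extra assumption such as no overbidding. Concretely, given the Lucier--Borodin bound, your inequality would force $(c-1)\sum_i b_i(\A_i(b))\le (c-1)\sum_i v_i(\A_i(b))$, which fails whenever the winners overbid. So the gap is not in the charging argument you sketch in (i) and (iii); it is that you are aiming at the wrong smoothness variant. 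Switch the right-hand side to $-(c-1)\sum_i p_i(b)$ and simply cite the Lucier--Borodin critical-price lemma, and the proof is complete in two lines.
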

\begin{proof}
 Given the mechanism $\M$ we define as $\theta_i(S,b_{-i})$ as the critical value 
that player $i$ has to bid on set $S$ such that he is allocated set $S$ given the bid
profiles of the rest of the players. 

In our proof we will use a very nice fact about $c$-approximate greedy mechanisms that 
was proved by Lucier and Borodin. For any $c$-approximate greedy mechanism and any feasible 
allocation $A'$ it must hold that $\sum_i \theta_i(A_i',b_{-i})\leq c\sum_i b_i(A_i)=c\sum_i p_i(b)$.

Now consider a valuation profile $v$ and bid profile $b$. Let $\opt_i(v)$ be the set allocated
to player $i$ in the optimal allocation for $v$. Let $b_i'(v)$ be the following bid strategy for player $i$:
he bids $v_i(\opt_i(v))/2$ single-mindedly on $\opt_i(v)$, i.e. $\forall S\neq \opt_i(v): b_i(S)=0$
and $b_i(\opt_i(v))=\frac{v_i(\opt_i(v))}{2}$. There are two cases: either he gets allocated his
optimal item in which case his $u_i(b_i'(v),b_{-i})=\frac{v_i(\opt_i(v))}{2}$ or he doesn't 
in which case: $\theta_i(\opt_i(v),b_{-i})\geq \frac{v_i(\opt_i(v))}{2}$. Therefore, we get:
\begin{equation}
 u_i(b_i'(v),b_{-i})\geq \frac{v_i(\opt_i(v))}{2}-\theta_i(\opt_i(v),b_{-i})
\end{equation}
Summing over all players and using the upper bound on critical price proved by Lucier and Borodin instantiated
for $A'=\opt(v)$ we get:
\begin{equation}
 \sum_iu_i(b_i'(v),b_{-i}) \geq \sum_i \frac{v_i(\opt_i(v))}{2} - \sum_i \theta_i(\opt_i(v),b_{-i})
\geq \frac{1}{2}\sum_i v_i(\opt_i(v)) - c \sum_i p_i(b)
\end{equation}
Now, by rearranging we get:
\begin{equation}
 \sum_iu_i(b_i'(v),b_{-i}) + \sum_i p_i(b) \geq \sum_i \frac{v_i(\opt_i(v))}{2} - \sum_i \theta_i(\opt_i(v),b_{-i})
\geq \frac{1}{2}\sum_i v_i(\opt_i(v)) - (c-1) \sum_i p_i(b)
\end{equation}

The latter states that our game satisfies our most relaxed semi-smoothness definition for $\lambda=\frac{1}{2}$
and $\mu=c-1$.
\end{proof}

In fact, we can improve our bound on the price of anarchy to $\frac{e}{e-1}c$ using the
same trick of randomized deviations that we did in item bidding. 

\section{Variable Strategy Space Games and Universal Smoothness}

In this section we cope with the following more general class of Bayesian Games
whose goal is to capture games where the strategy space of a player not only his 
utility is dependend on his type. We denote with $A_i(t_i)\subseteq A_i$ the actions available
to a player of type $t_i$. A players strategy is a function $s_i:T_i\rightarrow A_i$ that 
satisfies $\forall t_i\in T_i: s_i(t_i)\in A_i(t_i)$. We will denote with $A(t)=\times_i A_i(t_i)$.
We still assume that the utility of a player depends on his type and the actions of the 
rest of the players: $u_i:T_i\times A\rightarrow \R$. We must point out that the utility of a player $i$ is undefined
if $a_i\notin A_i(t_i)$. The rest of the definitions are
the same as in our previous definition of Bayesian Games. Our initial definition was
a special case of this class of Bayesian Games where $\forall t_i\in T_i:A_i(t_i)=A_i$.

For such a Bayesian Game we need a slight alteration of the definition of what it means
for a complete information instance of it to be smooth, since the utility of the complete
information instance is undefined on strategies that are not in the strategy space of a player
for that instance. 
\begin{defn}
 A Bayesian Game is $(\lambda,\mu)$-smooth if $\forall t\in T$ and for all $a,a'\in A(t)$:
$$\sum_{i} u_i^{t_i}(a_i',a_{-i})\geq \lambda \sum_i u_i^{t_i}(a')-\mu \sum_i u_i^{t_i}(a)$$
\end{defn}

The above class of games is a very general class of Bayesian Game and it 
is hard to believe that one can generalize smoothness to such a class.
However, a lot of the games in the literature satisfy an even stronger definition of smoothness.
For games that satisfy this stronger definition of smoothness we can generalize existing results
to incomplete information versions of the games.

\begin{defn}
 A Bayesian Game is universally $(\lambda,\mu)$-smooth iff $\forall t,w \in T$ and for all 
$a\in A(t)$, $b\in A(w)$:
$$\sum_i u_i^{w_i}(b_i,a_{-i})\geq \lambda \sum_i u_i^{w_i}(b)-\mu \sum_i u_i^{t_i}(a)$$
Since $b_i\in A_i(w_i)$ observe that the first term is also well defined.
\end{defn}

Universal smoothness is a more restrictive notion than smoothness, in the sense that if a Bayesian
Game is universally smooth then it is also smooth. This follows from the fact that if we take 
the definition of universal smoothness restricted only when $t=w$ then we get the smoothness definition. 
In addition the two definitions are equivalent to the smoothness of \cite{Roughgarden2009} for 
complete information games. 

Though the above definition seems restrictive enough in the sections that follow we will show
that most routing games studied in the literature are actually universally $(\lambda,\mu)$-smooth
and therefore the bounds known for the complete information carry over to some natural incomplete
information versions. 

\begin{theorem} 
 If a Bayesian Game with Variable Strategy Space is universally $(\lambda,\mu)$-smooth then
it has Bayes-Nash PoA at most $(1+\mu)/\lambda$.
\end{theorem}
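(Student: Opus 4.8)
The plan is to mimic the proof of the Main Theorem (Theorem~\ref{thm:main}) almost verbatim, with the single complication that deviations must remain inside a player's type-dependent action set. First I would fix a Bayes--Nash equilibrium $s$ and, for a player $i$ of type $t_i$, consider the deviation to the optimal action $\opt_i(t_i,w_{-i})$ for a freshly drawn type profile $w_{-i}$; the crucial observation is that $\opt_i(t_i,w_{-i})\in A_i(t_i)$, so this is a legal deviation for a type-$t_i$ player, and the BNE inequality applies. Taking expectations over $t_i$ and $w_{-i}$ exactly as in Theorem~\ref{thm:main}, and using independence of the type distributions together with the relabeling $t_i\leftrightarrow w_i$ in the expectation, I obtain
\begin{equation*}
\E_t[u_i^{t_i}(s(t))] ~\geq~ \E_t\E_w[u_i^{w_i}(\opt_i(w),s_{-i}(t_{-i}))].
\end{equation*}
Here the argument $(\opt_i(w),s_{-i}(t_{-i}))$ has $\opt_i(w)\in A_i(w_i)$ and $s_{-i}(t_{-i})\in A_{-i}(t_{-i})$, which is precisely the form appearing in the definition of universal smoothness (with the roles $b=\opt(w)\in A(w)$ and $a=s(t)\in A(t)$).

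Next I would sum over all players and apply the universal $(\lambda,\mu)$-smoothness inequality pointwise for each realized pair $(t,w)$:
\begin{align*}
\E_t[SW^t(s(t))] ~=~ \sum_i \E_t[u_i^{t_i}(s(t))]
 ~&\geq~ \E_t\E_w\Big[\sum_i u_i^{w_i}(\opt_i(w),s_{-i}(t_{-i}))\Big]\\
 ~&\geq~ \E_t\E_w\big[\lambda SW^w(\opt(w)) - \mu SW^t(s(t))\big]\\
 ~&=~ \lambda\,\E_w[SW^w(\opt(w))] - \mu\,\E_t[SW^t(s(t))].
\end{align*}
Rearranging gives $(1+\mu)\E_t[SW^t(s(t))] \geq \lambda\,\E_w[SW^w(\opt(w))]$, i.e. the Bayes--Nash PoA is at most $(1+\mu)/\lambda$. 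Note that, in contrast to Theorem~\ref{thm:main}, no separate ``misalignment'' lemma (inequality~\ref{eqn:misalignment}) is needed: the $\mu$-term in the universal smoothness definition already involves $SW^t(a)$ evaluated at the \emph{actual} type $t$ rather than the hypothetical type $w$, so the welfare term on the right-hand side is $SW^t(s(t))$ directly and the proof closes in one step.

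The only real thing to be careful about is legality of deviations, and this is exactly where the extra strength of \emph{universal} (as opposed to plain) smoothness pays off: plain smoothness would only let us compare $\sum_i u_i^{t_i}(a_i',a_{-i})$ for $a,a'\in A(t)$ at a \emph{common} type profile $t$, but the deviating action $\opt_i(w)$ lives in $A_i(w_i)$ while the equilibrium actions of the others live in $A_{-i}(t_{-i})$ — a mixed profile that the plain definition cannot even evaluate. Universal smoothness is designed precisely to bound such mixed profiles, and the Bayesian averaging argument of Theorem~\ref{thm:main} only ever produces profiles of this mixed form, so everything matches up. I would also remark, as the paper does for the constant-strategy-space case, that the same argument extends to mixed Bayes--Nash equilibria and to cost-minimization games with the obvious sign changes.
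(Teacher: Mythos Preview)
Your proposal is correct and follows essentially the same approach as the paper: derive $\E_t[SW^t(s(t))]\geq \E_t\E_w\big[\sum_i u_i^{w_i}(\opt_i(w),s_{-i}(t_{-i}))\big]$ via the BNE deviation and relabeling argument of Theorem~\ref{thm:main}, then apply universal smoothness with $b=\opt(w)\in A(w)$ and $a=s(t)\in A(t)$ to close directly without the misalignment step. Your explicit check that $\opt_i(t_i,w_{-i})\in A_i(t_i)$ and your explanation of why universal (rather than plain) smoothness is needed are both accurate and more detailed than the paper's own terse proof.
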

\begin{proof}
Using similar reasoning as in Theorem \ref{thm:main} we can arrive at the conclusion that:
\begin{align*}
\E_t[SW^t(s(t))]~\geq~ & \E_{t} \E_{w} [\sum_i u_i^{w_i}(\opt_i(w), s_{-i}(t_{-i}))] \nonumber 
\end{align*}
Then we observe that $\opt(w)\in A(w)$ and $s(t)\in A(t)$. Thus applying the definition of
universal smoothness we get:
\begin{align*}
\E_t[SW^t(s(t))]~\geq~ & \E_t \E_w [\lambda \sum_i u_i^{w_i}(\opt(w))-\mu \sum_i u_i^{t_i}(s(t))]\\
~=~& \lambda \E_w[SW^w(\opt(w))]-\mu \E_t[SW^t(s(t))]
\end{align*}
which gives the theorem.
\end{proof}

\subsection{Weighted Congestion Games with Probabilistic Demands}
In this section we examine how our analysis applies to incomplete information versions of routing games. 
The games that we study in this section are cost minimization games and hence we will use the variants
of our theorems and notation so far adapted to cost minimization games. 

We first describe the complete information game. We consider unsplittable atomic selfish routing 
games where each player has demand $w_i$ that he needs to send from a node $s_i$ to a node $t_i$ over a graph $G$.
Let $\mathcal{P}_i$ be the set of paths from $s_i$ to $t_i$. The strategy of a player is to choose a path $p_i\in \mathcal{P}_i$.
Each edge $e$ of the graph has some delay function $l_e(x_e)$ where $x_e$ denotes
the total congestion of edge $e$: $x_e=\sum_{i: e\in p_i}w_i$, which is assumed to 
be monotone non-decreasing. Given a strategy profile $p=(p_i)_{i\in n}$
the cost of a player is $c_i(p)=\sum_{e \in p_i} w_i l_e(x_e)$.

In the literature so far only the case where $w_i$ are common knowledge has been studied. This is a very strong informational
assumption. Instead it is more natural consider the case where the $w_i$ are private information
and only a distribution on them is common knowledge. Thus the type a player in our game is his
weight $w_i$. In fact to make the game comply with our definition of Bayesian Games we will assume that the strategy of a player is a pair $(r_i,p_i)$ of a rate $r_i$ and path $p_i$. In addition given a type $w_i$ a player's
action space is $A_i(w_i)=\{(w_i,p_i): p_i\in \mathcal{P}_i\}$, i.e. we constraint the player to have to
route his whole demand. Given the above small alteration in the definition of the game it is now
easy to see that the cost of a player depends only on the strategies of the other players and 
not on their types: $\forall a_i=(r_i,p_i)\in A_i(t_i), \forall a_{-i}=(r_{-i},p_{-i})\in A_{-i}: c_i^{t_i}(a_i,a_{-i})=\sum_{e\in p_i} r_i l_e(x_e(a))$ where $x_e(a)=\sum_{k: e\in p_k} r_k$ and it's undefined for $a_i\notin A_i(t_i)$. 
Hence, if we prove that the latter Bayesian Game is universally $(\lambda,\mu)$-smooth,
this would imply a Bayes-Nash PoA of $\lambda/(1-\mu)$.

Very recently (Bhawalkar et al \cite{Bhawalkar2010}) showed that 
weighted congestion games are smooth games and therefore smoothness arguments provide tight 
results for the Price of Anarchy. Our analysis shows that one can extend these upper bounds to
incomplete information too. Moreover, since complete information is a special case of incomplete 
information where priors are singleton distributions, the bayes-nash price of anarchy analysis 
will still be tight. Moreover, this shows a collapse of efficiency between complete and incomplete
information and shows that knowing more doens't necessarily improve the efficiency guarrantee's in this
types of games. 

Most of the literature on weighted congestion games uses the following fact:
if for the class of delay functions $\C$ allowed we have that: 
$\forall x,x^*\in \mathbb{R}^+: x^*l_e(x+x^*)\geq \lambda x^*l_e(x^*)+\mu xl_e(x)$ then 
weighted congestion games with delays in class $\C$ are $(\lambda,\mu)$-smooth.
We will actually show that if the delay functions satisfy the above property then the
Bayesian Game is universally $(\lambda,\mu)$-smooth. 

\begin{lemma}If for any delay function $l_e()$ in the class of delay functions $\C$
allowed we have that: $\forall x,x^*\in \mathbb{R}^+: x^*l_e(x+x^*)\geq \lambda x^*l_e(x^*)+\mu xl_e(x)$
then the resulting class of Bayesian Unsplittable Selfish Routing Games with Probabilistic 
Demands is universally $(\lambda,\mu)$-smooth.
\end{lemma}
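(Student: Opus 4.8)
The plan is to run the usual per-edge smoothness computation for weighted congestion games (as in Bhawalkar et al.\ \cite{Bhawalkar2010}), being careful that a deviating player now carries the demand prescribed by a \emph{different} type profile. First I would fix two weight (type) profiles $t=(t_i)_{i}$ and $w=(w_i)_{i}$, an action profile $a\in A(t)$ with $a_i=(t_i,p_i)$, and an action profile $b\in A(w)$ with $b_i=(w_i,q_i)$, where $p_i,q_i\in\mathcal P_i$. Let $x_e(a)=\sum_{j:\,e\in p_j}t_j$ and $x_e(b)=\sum_{j:\,e\in q_j}w_j$ be the induced edge congestions. In the mixed profile $(b_i,a_{-i})$ player $i$ routes demand $w_i$ along $q_i$ while every $j\neq i$ still routes $t_j$ along $p_j$, so for every edge $e\in q_i$ the congestion is $x_e(b_i,a_{-i})=w_i+\sum_{j\neq i:\,e\in p_j}t_j\leq w_i+x_e(a)$. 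Since each $l_e$ is non-decreasing, $c_i^{w_i}(b_i,a_{-i})=\sum_{e\in q_i}w_i\,l_e\big(x_e(b_i,a_{-i})\big)\leq\sum_{e\in q_i}w_i\,l_e\big(w_i+x_e(a)\big)$; note that $b_i\in A_i(w_i)$, so this quantity is well defined.

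Next I would sum over all players and re-index the double sum by edges: $\sum_i c_i^{w_i}(b_i,a_{-i})\leq\sum_e\sum_{i:\,e\in q_i}w_i\,l_e(w_i+x_e(a))$. For a fixed edge $e$, every player $i$ routing through $e$ in $b$ satisfies $w_i\leq x_e(b)$, so monotonicity gives $l_e(w_i+x_e(a))\leq l_e(x_e(b)+x_e(a))$ and hence $\sum_{i:\,e\in q_i}w_i\,l_e(w_i+x_e(a))\leq x_e(b)\,l_e(x_e(a)+x_e(b))$. Summing over $e$,
$$\sum_i c_i^{w_i}(b_i,a_{-i})\ \leq\ \sum_e x_e(b)\,l_e\big(x_e(a)+x_e(b)\big).$$
Then I would apply the hypothesis on the delay functions edge by edge, with $x=x_e(a)$ and $x^*=x_e(b)$, bounding each summand by $\lambda\,x_e(b)\,l_e(x_e(b))+\mu\,x_e(a)\,l_e(x_e(a))$; summing and then expanding $x_e(b)l_e(x_e(b))=\sum_{i:\,e\in q_i}w_i\,l_e(x_e(b))$ and regrouping back by players gives $\sum_e x_e(b)l_e(x_e(b))=\sum_i c_i^{w_i}(b)$ and likewise $\sum_e x_e(a)l_e(x_e(a))=\sum_i c_i^{t_i}(a)$. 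This yields $\sum_i c_i^{w_i}(b_i,a_{-i})\leq\lambda\sum_i c_i^{w_i}(b)+\mu\sum_i c_i^{t_i}(a)$, which is exactly universal $(\lambda,\mu)$-smoothness in the cost-minimization form, and the preceding theorem then gives Bayes-Nash PoA at most $\lambda/(1-\mu)$.

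This is almost identical to the complete-information argument, so there is no genuinely hard step; the only thing that needs care is the bookkeeping for the deviator's demand. In an ordinary weighted congestion game the weight of the deviating player is fixed throughout, whereas here the deviation to $b_i$ imports the demand $w_i$ coming from the \emph{second} type profile $w$, and the other players keep the demands from $t$. The computation still closes because we never need the deviator's own contribution to be tied to $x_e(a)$: we only ever upper bound the post-deviation congestion on an edge of $q_i$ first by $w_i+x_e(a)$ and then by $x_e(b)+x_e(a)$, and the delay hypothesis is invoked with the two \emph{independent} quantities $x_e(a)$ and $x_e(b)$ playing the roles of $x$ and $x^*$ — which is precisely what makes the \emph{universal} (rather than plain) smoothness inequality come out. (The displayed hypothesis on $l_e$ is used here in the direction appropriate to a cost game, i.e.\ as the upper bound $x^*l_e(x+x^*)\leq\lambda x^*l_e(x^*)+\mu xl_e(x)$, the orientation underlying the preceding discussion of $(\lambda,\mu)$-smoothness and of the PoA bound $\lambda/(1-\mu)$.)
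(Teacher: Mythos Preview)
Your proposal is correct and follows essentially the same approach as the paper: bound the deviator's cost edgewise by $w_i\,l_e(w_i+x_e(a))$, push this up to $x_e(b)\,l_e(x_e(a)+x_e(b))$ via monotonicity, apply the delay hypothesis per edge, and regroup back into player costs. Your observation that the displayed hypothesis must be read with $\leq$ (the cost-game orientation) is also exactly what the paper's own computation uses.
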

\begin{proof}
Let $w$, $t$ be two type profiles. Let $a=(w,p)\in A(w)$ and $b=(t,p')\in A(t)$. Let 
$x_e(a)=\sum_{i:e\in p_i} w_i$ and $x_e(b)=\sum_{i:e\in p_i'}t_i$. Then:
\begin{align*}
\sum_i c_i^{t_i}(b_i,a_{-i})\leq & \sum_i \sum_{e\in p_i'}t_i l_e(t_i+x_e(a))
\leq  \sum_i \sum_{e\in p_i'}t_i l_e(x_e(b)+x_e(a))\\
= & \sum_e x_e(b)l_e(x_e(b)+x_e(a))\\
\leq & \lambda \sum_e x_e(b)l_e(x_e(b))+\mu \sum_e x_e(a) l_e(x_e(a))\\
= & \lambda \sum_i \sum_{e\in p_i'} t_i l_e(x_e(b))+ \mu \sum_e \sum_{e\in p_i} w_i l_e(x_e(a))\\
= & \lambda \sum_i c_i^{t_i}(b) + \mu \sum_i c_i^{w_i}(a)
\end{align*}
which is exactly the universal smoothness definition.
\end{proof}

\subsection{Bayesian Effort Games}
In this section we study what our analysis imply for incomplete information versions of 
the following class of effort games \cite{Bachrach2011}: There is a set of players 
$[n]$ and a set of project $[m]$. Each player has some budget of effort $B_i$ which he can split among the projects. Each project $j$ has some value that is a non-decreasing concave function $V_j()$ of the weighted sum of efforts $\sum_{i\in N} a_{ij} x_{ij}$ where $a_{ij}$ is some ability factor of player $i$ in project $j$ (we assume $V(0)=0$). 
The value of a project is then split among
the participants and each participant receives a share proportional to his weighted input $a_{ij}x_{ij}$.
Such games where shown in \cite{Bachrach2011} to be Valid Utility Games \cite{Vetta2002} and hence $(1,1)$-smooth.

We will consider the natural incomplete information version of these games where each players
ability vector $a_i=(a_{ij})_{j\in [m]}$ and the budget $B_i$ are private information, each ability vector
is drawn from some distribution $F_i$ on $\R^{m+1}_+$. The $F_i$ are independent. To 
adapt it in our variable strategy space model we will assume that the strategy of a player
is a pair $(\tilde{a}_i,x_i)$ where $\tilde{a}_i$ is the declared ability vector and
$x_i$ is the vector of efforts of player $i$. We will constraint the strategy space
such that given an ability vector $a_i$ the player has to declare his true ability vector: 
$A_i(a_i,B_i)=\{(a_i,x_i): x_i\in \R^m, \sum_j x_{ij}\leq B_i\}$.

We are able to show that these games are actually universally smooth games and thereby
the Bayes-Nash PoA of the above Bayesian Games will be at most $2$.

\begin{lemma} Bayesian Effort Market Games are universally $(1,1)$-smooth. \end{lemma}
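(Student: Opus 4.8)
The plan is to verify the universal smoothness inequality directly from the structure of effort market games, mimicking the argument that shows they are valid utility games. Fix two type profiles, say $(a,B)$ (the ``$t$'' profile, corresponding to an action profile $x \in A(a,B)$) and $(a^*,B^*)$ (the ``$w$'' profile, corresponding to an action profile $y \in A(a^*,B^*)$). Recall a player's strategy must declare his true ability vector, so the declared abilities are $a$ and $a^*$ respectively, and the utility of player $i$ with ability $a^*_i$ playing effort $y_i$ against the effort vector $x_{-i}$ is his proportional share of the value of each project: $u_i^{a^*_i}(y_i, x_{-i}) = \sum_j \frac{a^*_{ij} y_{ij}}{a^*_{ij} y_{ij} + \sum_{k\neq i} a_{kj} x_{kj}} V_j(a^*_{ij} y_{ij} + \sum_{k\neq i} a_{kj} x_{kj})$, with the convention that a zero-effort term contributes zero. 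The goal is to show $\sum_i u_i^{a^*_i}(y_i, x_{-i}) \geq \sum_i u_i^{a^*_i}(y) - \sum_i u_i^{a_i}(x)$, i.e. $(1,1)$-universal smoothness.

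First I would fix a single project $j$ and argue the inequality project-by-project, since welfare, the deviating utilities, and the equilibrium utilities all decompose additively over projects (each project's value is entirely distributed among its contributors, so the sum of player utilities on project $j$ equals $V_j$ of its total weighted effort). Let $e_j = \sum_k a_{kj} x_{kj}$ be the total weighted effort on project $j$ under $x$, and $e^*_j = \sum_k a^*_{kj} y_{kj}$ the corresponding quantity under $y$. The key monotonicity/concavity lemma I would isolate is the standard one behind valid utility / submodularity arguments: for each $i$, moving from $x_{-i}$ to $(y_i, x_{-i})$ on project $j$ adds weighted effort $a^*_{ij} y_{ij}$, and player $i$'s share of the resulting value is at least the \emph{marginal increase} $V_j(e_j^{(-i)} + a^*_{ij} y_{ij}) - V_j(e_j^{(-i)})$ where $e_j^{(-i)} = e_j - a_{ij}x_{ij}$; since $V_j$ is concave and non-decreasing and $e_j^{(-i)} \le e_j$, this marginal is at least $V_j(e_j + a^*_{ij} y_{ij}) - V_j(e_j)$, hence at least the ``diminished marginal'' evaluated at the full $x$-load. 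Summing these telescoping-style marginals over $i$ and using concavity once more to pass from the sum of single-player marginals to the joint increment gives $\sum_i u_i^{a^*_i}(y_i, x_{-i}) \ge \sum_j \big( V_j(e_j + e^*_j) - V_j(e_j) \big)$ on each project, hence overall.

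Finally I would close the gap with two elementary facts about the concave non-decreasing $V_j$ with $V_j(0)=0$: subadditivity (so $V_j(e_j + e^*_j) \ge V_j(e^*_j) + V_j(e_j + e^*_j) - V_j(e_j) \ge \ldots$, more precisely $V_j(e_j+e^*_j) - V_j(e_j) \ge V_j(e^*_j) - V_j(e_j)$ is immediate, but what I actually want is $V_j(e_j + e^*_j) - V_j(e_j) \ge V_j(e^*_j) - e_j \cdot (\text{something})$) — the cleanest route is: concavity gives $V_j(e_j + e^*_j) \ge V_j(e^*_j)$ trivially by monotonicity, and also $V_j(e_j+e^*_j) - V_j(e_j) \ge 0$; combining, $V_j(e_j + e^*_j) - V_j(e_j) \ge V_j(e^*_j) - V_j(e_j)$. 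Summing over $j$ yields $\sum_i u_i^{a^*_i}(y_i, x_{-i}) \ge \sum_j V_j(e^*_j) - \sum_j V_j(e_j) = \sum_i u_i^{a^*_i}(y) - \sum_i u_i^{a_i}(x)$, which is exactly the claim.

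I expect the main obstacle to be the share-versus-marginal step: rigorously justifying that player $i$'s \emph{proportional share} of $V_j$ at the post-deviation load is at least the marginal contribution of his added weighted effort at the \emph{original} load $e_j$ (not merely at the reduced load $e_j^{(-i)}$). This is where concavity of $V_j$ is essential and where one must be careful that the shares of the other players do not need to be tracked — only the total value and player $i$'s slice matter, and the inequality ``share $\ge$ marginal at a larger base point'' follows because $\frac{\Delta}{e^*_j \text{ total}} V_j(\text{total}) \ge V_j(\text{total}) - V_j(\text{total} - \Delta)$ precisely when $V_j$ is concave with $V_j(0) = 0$. Handling the degenerate cases (zero efforts, zero abilities, empty projects) and confirming the additive decomposition over projects is bookkeeping, but the concavity inequality is the crux and worth stating as a separate small claim before assembling the sum.
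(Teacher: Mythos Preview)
Your argument is correct and follows essentially the same route as the paper: proportional share $\ge$ marginal contribution (via $V_j(x)/x$ decreasing), then concavity to aggregate the per-player marginals into the joint increment $V_j(e_j+e_j^*)-V_j(e_j)$, then monotonicity to drop $e_j$. The only cosmetic difference is that the paper shifts the base point for each $i$ so the marginals literally telescope to $V_j(e_j+e_j^*)-V_j(e_j)$, whereas you bound every marginal at the common base $e_j$ and then invoke the concavity inequality $\sum_i\bigl(V_j(e_j+\delta_i)-V_j(e_j)\bigr)\ge V_j(e_j+\sum_i\delta_i)-V_j(e_j)$; both are the same diminishing-returns fact, so this is a stylistic rather than substantive divergence (though your ``closing the gap'' paragraph meanders before landing on the right inequality and would benefit from being tightened).
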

\begin{proof}
The proof is an adaptation of the smoothness proof for valid utility games \cite{Vetta2002,Roughgarden2009}, but in the space of
real functions instead of set functions. In addition it is adapted to accommodate for different types
of players so as to show the stronger version of universal smoothness is satisfied. 

Let $s=(a,x)\in A(a,B)$ and $s'=(b,y)\in A(b,B')$. Then we have:
\begin{align*}
\sum_i u_i^{(b_i,B_i)}(s_i',s_{-i}) = & \sum_i \sum_j b_{ij} y_{ij} \frac{V_j(b_{ij}y_{ij}+a_{-i}\cdot x_{-i})}{b_{ij}y_{ij}+a_{-i}\cdot x_{-i}} 
\end{align*}
Now we use the fact that for a concave function $V_j()$ that satisfies $V_j(0)=0$, it holds that
$V_j(x)/x$ is a decreasing function. Hence: 
$$\frac{V_j(b_{ij}y_{ij}+a_{-i}\cdot x_{-i})}{b_{ij}y_{ij}+a_{-i}\cdot x_{-i}} \leq 
\frac{V_j(a_{-i}\cdot x_{-i})}{a_{-i}\cdot x_{-i}}\implies 
b_{ij} y_{ij} \frac{V_j(b_{ij}y_{ij}+a_{-i}\cdot x_{-i})}{b_{ij}y_{ij}+a_{-i}\cdot x_{-i}} 
\geq V_j(b_{ij}y_{ij}+a_{-i}\cdot x_{-i})-V_j(a_{-i}\cdot x_{-i})$$
Thus:
\begin{align*}
\sum_i u_i^{(b_i,B_i)}(s_i',s_{-i}) \geq \sum_i \sum_j (V_j(b_{ij}y_{ij}+a_{-i}\cdot x_{-i})-V_j(a_{-i}\cdot x_{-i}))
\end{align*}
In addition, since $V_j()$ is concave, increasing then for all $t_1>t_2$ and $y>0$: $V_j(y+t_1)-V_j(t_1)\leq V_j(y+t_2)-V_j(t_2)$. Combining we get:
\begin{align*}
\sum_i u_i^{(b_i,B_i)}(s_i',s_{-i}) \geq& \sum_j \sum_i (V_j(b_{ij}y_{ij}+a_{-i}\cdot x_{-i})-V_j(a_{-i}\cdot x_{-i}))\\
\geq & \sum_j \sum_i V_j\left( b_{ij}y_{ij}+a_{-i}\cdot x_{-i}+ a_{ij}x_{ij}+\sum_{k=1}^{i-1}b_{kj}y_{kj}\right)-V_j\left(a_{-i}\cdot x_{-i}+a_{ij}x_{ij}+\sum_{k=1}^{i-1}b_{kj}y_{kj}\right)\\
= & \sum_j \sum_i  V_j\left( \sum_{k=1}^i (b_{kj}y_{kj}+a_{kj}x_{kj})+ \sum_{k=i+1}^{n}a_{kj}y_{kj}\right)-V_j\left(\sum_{k=1}^{i-1} (b_{kj}y_{kj}+a_{kj}x_{kj})+ \sum_{k=i}^{n}a_{kj}y_{kj}\right)\\
= & \sum_j V_j\left(\sum_{k=1}^{n} (b_{kj}y_{kj}+a_{kj}x_{kj})\right)- V_j\left(\sum_{k=1}^{n} a_{kj}x_{kj}\right)\\
\geq & \sum_j V_j\left(\sum_{k=1}^{n} b_{kj}y_{kj}\right)- V_j\left(\sum_{k=1}^{n} a_{kj}x_{kj}\right)\\
= & SW^{b,B'}(s')-SW^{a,B}(s)
\end{align*}
\end{proof}

\begin{corollary}
The Bayes-Nash PoA of Bayesian Effort Games is at most $2$.
\end{corollary}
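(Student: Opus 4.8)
The plan is to obtain the corollary as an immediate consequence of the two results established just above: the lemma showing that Bayesian Effort Market Games are universally $(1,1)$-smooth, and the theorem asserting that any Bayesian Game with variable strategy space that is universally $(\lambda,\mu)$-smooth has Bayes-Nash PoA at most $(1+\mu)/\lambda$. First I would note that the Bayesian Effort Game, in the formulation adopted here, is indeed a variable strategy space Bayesian Game: the action set $A_i(a_i,B_i)=\{(a_i,x_i): x_i\in\R^m, \sum_j x_{ij}\le B_i\}$ depends genuinely on the player's type $(a_i,B_i)$, both through the forced declaration of the true ability vector and through the budget constraint. Then I would simply instantiate the universal smoothness PoA theorem with $\lambda=1$ and $\mu=1$, which gives $(1+\mu)/\lambda = 2$.

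Before invoking the theorem I would verify that the hypotheses of the Bayesian Game model are met. The types $(a_i,B_i)$ are drawn independently from the distributions $F_i$, so the independence assumption holds. Moreover each player's utility depends only on his own type and on the action profile, and not directly on the other players' types: since a strategy is constrained to report the true ability vector, the weighted efforts $a_{kj}x_{kj}$ entering the project values $V_j$ are functions of the action profile alone, so $u_i^{(b_i,B_i)}$ is well-defined given the actions of the others. This is exactly the setting for which the universal smoothness machinery was built, and the expected social welfare is $\E[\sum_j V_j(\sum_k a_{kj}x_{kj})]$, which is precisely the quantity that the universal $(1,1)$-smoothness inequality bounds on both sides.

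There is no real obstacle here: the substance of the result is carried entirely by the preceding lemma, and the present statement is a one-line specialization of the general theorem. The only point worth double-checking is that, unlike the cost-minimization routing games where one needs $\mu<1$ for the bound $\lambda/(1-\mu)$ to be meaningful, effort games are utility-maximization games, so the relevant bound is $(1+\mu)/\lambda$ and the value $\mu=1$ is perfectly admissible, yielding the clean constant $2$.
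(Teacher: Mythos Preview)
Your proposal is correct and matches the paper's approach exactly: the corollary is stated without proof there because it is the immediate combination of the preceding universal $(1,1)$-smoothness lemma with the theorem that universal $(\lambda,\mu)$-smoothness gives Bayes-Nash PoA at most $(1+\mu)/\lambda$. Your additional checks that the model hypotheses are satisfied and that the utility-maximization bound $(1+\mu)/\lambda$ (rather than the cost-minimization bound) applies are appropriate due diligence.
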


\bibliographystyle{abbrv}
\bibliography{bayesian_smoothness}

\end{document}